\documentclass{article}

\usepackage{PRIMEarxiv}

\usepackage[utf8]{inputenc} 
\usepackage[T1]{fontenc}    
\usepackage{hyperref}       
\usepackage{url}            
\usepackage{booktabs}       
\usepackage{amsfonts}       
\usepackage{nicefrac}       
\usepackage{microtype}      
\usepackage{lipsum}
\usepackage{fancyhdr}       
\usepackage{graphicx}       
\graphicspath{{media/}}     

\pagestyle{fancy}
\thispagestyle{empty}
\rhead{ \textit{ }} 
\usepackage{colortbl} 
\usepackage{venndiagram}
\usepackage{algorithm}
\usepackage{algorithmic}
\usepackage{amsmath}
\usepackage{bm}
\usepackage{booktabs}
\usepackage{multirow}
\usepackage{makecell}
\usepackage{braket}
\usepackage{amsthm}
\usepackage{xcolor}
\usepackage{booktabs}
\usepackage{threeparttable}
\usepackage{algorithm}
\usepackage{algorithmic}
\usepackage{tikz}
\usepackage{pgfplots}
\usepackage{graphicx}
\usepackage{multirow}
\usepackage{booktabs}
\usepackage{amsmath}
\usepackage{tcolorbox}
\usepackage{amsthm}
\usepackage{bm}
\newtheorem{claim}{Claim}
\usepackage{xcolor}
\usepackage{lineno,hyperref}
\modulolinenumbers[5]
\usepackage{pifont}

\title{Quantum Visual Word Sense Disambiguation: Unraveling Ambiguities Through Quantum Inference Model}





  
\title{Quantum Visual Word Sense Disambiguation: Unraveling Ambiguities Through Quantum Inference Model}


\author{
Wenbo Qiao\textsuperscript{1}, 
Peng Zhang\textsuperscript{2}\thanks{Corresponding author} , 
Qinghua Hu\textsuperscript{2} \\
\textsuperscript{1}School of New Media and Communication, Tianjin University, Tianjin, China \\
\textsuperscript{2}College of Intelligence and Computing, Tianjin University, Tianjin, China
}

\begin{document}
\maketitle

\begin{abstract}
Visual word sense disambiguation focuses on polysemous words, where candidate images can be easily confused. Traditional methods use classical probability to calculate the likelihood of an image matching each gloss of the target word, summing these to form a posterior probability. However, due to the challenge of semantic uncertainty, glosses from different sources inevitably carry semantic biases, which can lead to biased disambiguation results. Inspired by quantum superposition in modeling uncertainty, this paper proposes a Quantum Inference Model for Unsupervised Visual Word Sense Disambiguation (Q-VWSD). It encodes multiple glosses of the target word into a superposition state to mitigate semantic biases. Then, the 
quantum circuit is executed, and the results are observed. By formalizing our method, we find that Q-VWSD is a quantum generalization of the method based on classical probability. Building on this, we further designed a heuristic version of Q-VWSD that can run more efficiently on classical computing. The experiments demonstrate that our method outperforms state-of-the-art classical methods, particularly by effectively leveraging non-specialized glosses from large language models, which further enhances performance. Our approach showcases the potential of quantum machine learning in practical applications and provides a case for leveraging quantum modeling advantages on classical computers while quantum hardware remains immature.
\end{abstract}

\section{Introduction}

Visual word sense disambiguation (VWSD) is an emerging task that can be understood as a multimodal version of word sense disambiguation \cite{raganato2023semeval}. It aims to determine the posterior probability that an image corresponds to a target word given its context. The answer is the image that maximizes this posterior probability. A feasible approach seems to be using visual language models to represent posterior probabilities by calculating image-text matching scores. \cite{DBLP:conf/icml/RadfordKHRGASAM21,DBLP:conf/cvpr/SinghHGCGRK22}. However, VWSD focuses on more complex task scenarios where the target words are ambiguous and polysemous, and the candidate images can easily be confused. Simply relying on matching methods cannot effectively resolve this task.
\cite{DBLP:conf/blackboxnlp/RassinRG22,DBLP:conf/acl/KwonGL0023}.

Some gloss-enhanced methods for word sense disambiguation have inspired related work to utilize external knowledge for VWSD \cite{huang2019glossbert,bevilacqua2021recent,zhang2024quantum}.
The disambiguation process can be viewed as a weighted summation based on classical probability where answer images are not directly determined by the context but by hidden variables related to glosses. The conditional probability of the image is calculated independently for each gloss. The final posterior probability is then derived using the law of total probability. 
However, due to the inherent semantic uncertainty in natural language, glosses from different perspectives (or dictionaries) inevitably carry semantic biases \cite{misono1997effects}. 
The \textbf{A} in Fig. \ref{fig:enter-label} quantifies the semantic bias \footnote{Fidelity is defined as the square of the inner product of two normalized vectors. The closer the fidelity is to 1, the more similar the two vectors are. It is commonly used in quantum information and can be considered a special form of cosine similarity.} .
Directly using these specific glosses for semantic disambiguation can lead to biased results \cite{zhang2024quantum}. For example, as shown in Fig. \ref{fig2}, using glosses from WordNet and large language models (LLMs) for ``\textit{Andromeda tree}" results in different disambiguation outcomes. This is because WordNet glosses focus more on the botanical meaning of ``\textit{Andromeda}", while LLMs glosses emphasize its astronomical meaning.

\begin{figure}[t]
    \centering
    \includegraphics[width=0.5\linewidth]{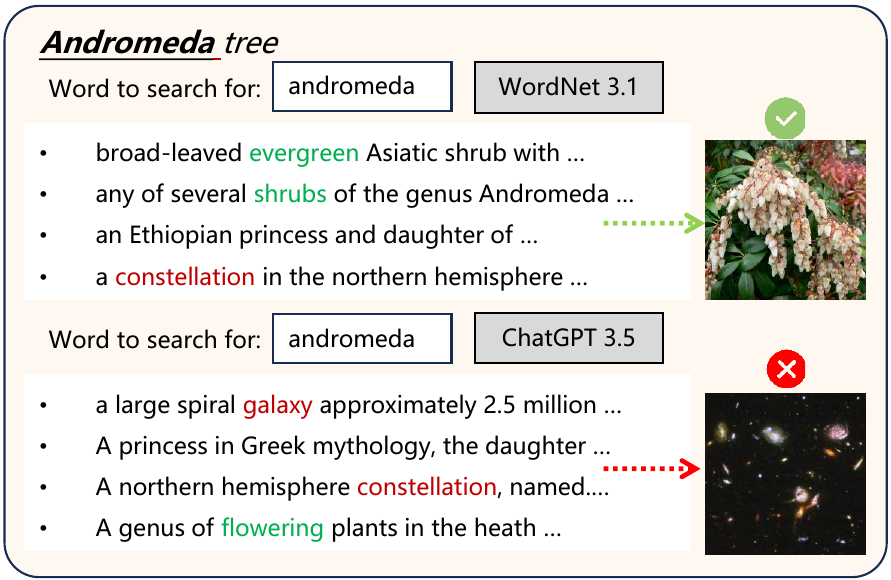}
    \caption{Different sources of glosses focus on various perspectives, which can lead to erroneous VWSD results.}
    \label{fig2}
\end{figure}

We suggest the primary reason underlying the aforementioned problems is that these models attempt to use glosses of certainty to represent the natural language of uncertainty \cite{goodman2015probabilistic}. Inspired by the principle of quantum superposition \cite{von2018mathematical}, we attempt to model the uncertainty of target words not with specific glosses but with a superposition of their glosses. When candidate images serve as observables, the superposition state of the target word dynamically collapses to implicit semantics based on observations rather than static semantics derived from specific glosses, thereby reducing semantic bias. The \textbf{B} in Fig. \ref{fig:enter-label} quantifies the effect of using quantum superposition states to mitigate semantic bias. To model this process, it is natural to use quantum machine learning to model and measure the superposition state of the target word and achieve visual word sense disambiguation.

Quantum machine learning (QML) has recently gained attention for its unique advantages, particularly in fitting capabilities \cite{schuld2021effect,yu2022power,shin2023exponential} and modeling generative models \cite{rudolph2022generation,qiao2024quantum2}. However, how these advantages can be applied to more practical tasks remains to be explored \cite{schuld2021effect}. As the most natural tool for describing and measuring superposition states, we are consequently motivated to use quantum machine learning to resolve ambiguities in visual vocabulary, while also exploring its potential advantages in more practical downstream tasks. 

Thus, we propose a Quantum Inference Model for Unsupervised Visual Word Sense Disambiguation (Q-VWSD).
Specifically, we use a pre-trained model to encode the context, images, and glosses. We then construct a superposition state representation of the glosses and encode it into the quantum circuit. Given an image, an observation is established, and the quantum circuit is executed. The output is subsequently interpreted as the posterior probability.
The maximum posterior probability is selected as the answer. We formalize Q-VWSD further and find that it violates the law of total probability, serving as a quantum generalization of classical probability methods. Based on this, we also propose a heuristic version of Q-VWSD, which is theoretically equivalent to the quantum circuit but better suited for running on classical computers. We conducted experiments on unsupervised visual word sense disambiguation, and the results show that our method outperforms the latest classical methods. 
\textbf{In summary, the main contributions of this paper are as follows:
}

\begin{itemize}
\item We designed Q-VWSD, which reduces semantic bias through superposition state representation and measurement, surpassing the performance of classical models.

\item We formalized Q-VWSD and analyzed the sources of its advantages, highlighting the fundamental differences between Q-VWSD and classical probabilistic methods.

\item We demonstrated the modeling advantages of QML in practical scenarios, which further led us to propose a heuristic version more suitable for classical computing, showing that QML can guide the design of better artificial intelligence even without quantum hardware.
\end{itemize}
In the following content, Section \ref{c2} will introduce the research area of this work, followed by the background knowledge in Section \ref{c3}. Our model will be presented in Section \ref{c4}, and experimental validation will be provided in Section \ref{c5}.

\begin{table*}[]
\centering
\label{t1}
\footnotesize
\tabcolsep=0.1cm

\caption{Quantum machine learning related work.}
\begin{threeparttable}
\begin{tabular}{@{}cccccc@{}}
\toprule
\textbf{Model}                              & \textbf{Executable}& \textbf{Comparable} & \textbf{Cross-Modal} & \textbf{Task}   & \textbf{Foundation} \\ \midrule
\multirow{3}{*}{Quantum Inspired Methods}  &          \ding{53}           &      \ding{51}                &     \ding{53}                  & Matching         & \multirow{3}{6cm}{\raggedright Using quantum probability to solve classical problems by analogy with quantum mechanical phenomena}.    \\
                                            &            \ding{53}         &    \ding{51}                  &             \ding{51}       & Classification    &       \\
                                            &          \ding{53}           &   \ding{51}                   &         \ding{53}   & Retrieval      &          \\ \midrule
\multirow{3}{*}{Quantum Computing Methods} &     \ding{51}                &   \ding{53}                  &          \ding{53}              & Regression      & \multirow{3}{5cm}{\raggedright Encoding classical data into quantum circuits and solving classical problems via quantum circuits.

}\\
                                                                 &     \ding{51}                &   \ding{53}                  &          \ding{53}         & Classification    &       \\
                                                              &     \ding{51}                &   \ding{53}                  &          \ding{53}                                                                                                                 & Generation      &        \\ 
                                    \hline  
                                                              
                                                          \multirow{3}{*}{Q-VWSD (ours)}
                                                              &     \multirow{3}{*}{\ding{51}}                &   \multirow{3}{*}{\ding{51}}                  &          \multirow{3}{*}{\ding{51}}                                                                                                                 & \multirow{3}{*}{Inference}      &  \multirow{3}{6cm}{\raggedright Encoding classical data with quantum probability and solving classical problems via quantum circuits.}      \\[3.5ex] 
                                                              \bottomrule

\end{tabular}
\textbf{Executable} indicates whether it can run on a quantum computer; \textbf{Comparable} denotes whether it can achieve performance comparable to classical SOTA models at the time; \textbf{Cross-Modal} specifies whether it supports cross-modal functionality; \textbf{Task} refers to the applications or tasks it addresses; and \textbf{Foundation} indicates the foundational principles utilized. 
\end{threeparttable} 
\end{table*}

\begin{figure}
    \centering
    \includegraphics[width=0.5\linewidth]{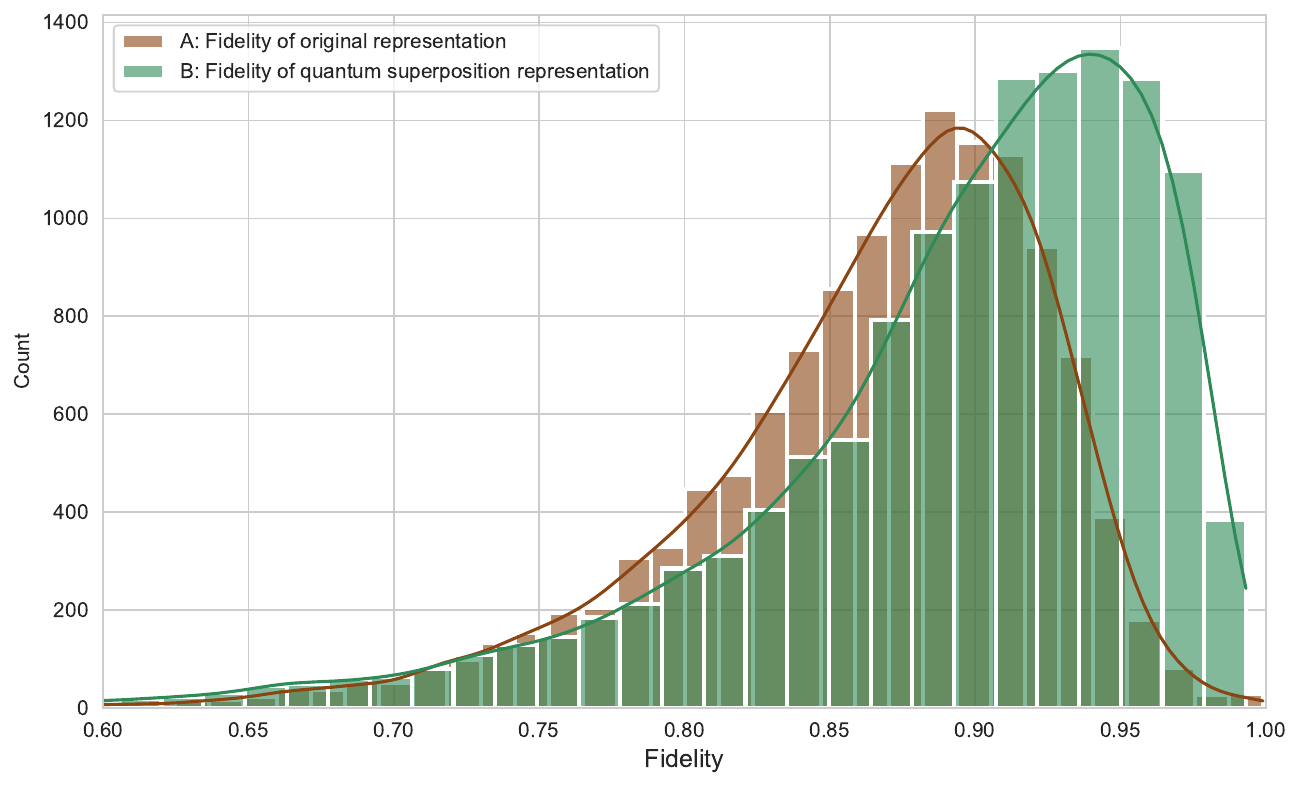}
    \caption{Fidelity between the target words constructed using glosses from WordNet and those constructed using glosses from ChatGPT 3.5. \textbf{A} shows that a large number of samples did not achieve 90\% fidelity, indicating a semantic bias between WordNet and  ChatGPT 3.5. \textbf{B} shows that after representation with quantum superposition states, the fidelity between WordNet and ChatGPT 3.5 reached 95\%, suggesting hope for mitigating semantic bias.}
    \label{fig:enter-label}
\end{figure}

\section{Related Work}\label{c2}
This subsection aims to establish a connection between VWSD and QML. Although this link is not commonly explored in related work, we believe that QML can elegantly address VWSD's challenges.

\subsection{Quantum Machine Learning}

Leveraging the unique properties of quantum computing to develop QML that surpasses classical machine learning has been a key focus of attention. Related work has explored various directions, such as quantum time series prediction \cite{DBLP:conf/nips/Bausch20,chen2022quantum}, quantum graph neural networks \cite{DBLP:conf/kdd/YanTY22,DBLP:conf/nips/TangY22}, and quantum generative models \cite{rudolph2022generation,qiao2024quantum2}. Recent studies have investigated the strong fitting capability of QML \cite{zhao2024quantum}, starting from the universal approximation theorem for quantum models \cite{schuld2021effect}.
However, due to the immaturity of quantum computing hardware, the advantages of quantum methods are often difficult to apply in machine learning, and their further implementation and practicality are not seen as promising. Meanwhile, another broad category of quantum-inspired machine-learning methods has also developed. These methods have been successfully applied to tasks such as information retrieval \cite{sordoni2013modeling, jiang2020quantum} and question answering \cite{zhang2018end, zhang2022complex}, as well as modeling tasks related to cognitive phenomena such as human emotions and metaphors \cite{gkoumas2021quantum,qiao2024quantum}. However, most of these models merely draw analogies between quantum phenomena and specific problems. 
Especially in pursuit of effectiveness, these methods tend to overlook certain principles of quantum mechanics, leading to their rigor being often questioned. 
We summarize the main basic information of the mentioned work in Table \ref{t1}. Our method can be seen as a combination of Quantum Inspired and Quantum Computing methods.

\subsection{Visual Word Sense Disambiguation}
VWSD is an advanced version of word sense disambiguation, recently introduced in the SemEval 2023 challenge \cite{raganato2023semeval}. Popular image-text retrieval and visual language models have not performed well on this task \cite{DBLP:conf/cvpr/SinghHGCGRK22,DBLP:conf/icml/RadfordKHRGASAM21}. Consequently, numerous studies have explored supervised methods \cite{wei2023stfx,li2023ecnu_miv}, which enhance model performance by incorporating external knowledge through techniques like prompt learning \cite{zhang2023srcb,ghahroodi2023sut}, back-translation \cite{vaiani2023polito}, and fine-grained contrastive learning \cite{yang2023tam}. However, these models still rely on expensive labeled data and lack support for unsupervised VWSD. Recently, unsupervised VWSD methods based on large language models have been proposed \cite{kritharoula2023large,dadas2023opi}, directly using large language models to read context and images for disambiguation. These models' inference processes are opaque and may suffer from hallucinations. \cite{DBLP:conf/acl/KwonGL0023} suggested using Bayesian inference combined with glosses to achieve unsupervised VWSD. However, this approach is vulnerable to semantic biases in the glosses.

Building on the mentioned work, our study aims to leverage the inherent superposition state modeling capability of QML to represent glosses, mitigate semantic bias, and address challenges in VWSD. Compared to existing QML, we further explore the advantages of modeling, tackling more realistic tasks, particularly demonstrating how QML inspires us to design better classical machine learning models.

\section{Preliminaries}\label{c3}

This section will briefly introduce the basic knowledge of quantum computing and quantum probability. Mathematically, they primarily utilize linear algebra to describe the transformation rules of complex-valued vectors within a Hilbert space (i.e., a complex-valued high-dimensional space $\mathcal{H}\in \mathcal{C}^d$). 

\subsection{Quantum Computing}

\textbf{Quantum State.} Unlike a classical bit, which is non-zero or one, a quantum bit (qubit) can be in a superposition state of 0 and 1, which has a mathematical form:
\begin{equation*}
|\phi\rangle = \alpha|0\rangle + \beta|1\rangle,
\end{equation*}
where $|0\rangle=\begin{bmatrix}1,0\end{bmatrix}^T$ and $|1\rangle=\begin{bmatrix}0,1\end{bmatrix}^T$are a set of orthonormal basis, corresponding to the basis states. $\alpha$ and $\beta$ are complex numbers that satisfy $|\alpha|^2+|\beta|^2=1$. So they are called probability amplitudes.
Mathematically, a qubit is actually represented as a 2-dimensional vector.
Similarly, a $N$-qubit quantum circuit can be represented by a quantum state in a $2^N$-dimensional Hilbert space: 
\begin{equation*}
    |\phi\rangle = \sum_{i=1}^{2^N}\alpha_i|x_i\rangle, 
\end{equation*}
where $\{a_i\}_{i=1}^{2^N}$ denotes the set of probability amplitude satisfying $\sum_i |a_i|^2=1$. $\{|x_i\rangle\}_{i=1}^{2^N}$ is a set of orthonormal basis in $2^N$-dimensional Hilbert space, corresponding to the basis states, where $x_i\in \{0,1\}^n$.

\begin{table}[t]
\centering
\caption{Notations and Descriptions}
\begin{tabular}{@{}cc@{}}
\toprule
\textbf{Notation} & \textbf{Descriptions}                                              \\ \midrule
$|{\phi}\rangle$ & Column vector, also called ket                            \\
$\langle{\phi}|$ & Conjugate transpose of $|\phi\rangle$, also called bra  \\
$\alpha$ & Probability amplitude of quantum state                           \\ 
P & Projection operator                            \\ 
$\dagger$ & Conjugate transpose \\
$M$ & Measurement operator \\
$\bm{v}$ & The encoded representation of image  \\
$\bm{c}$ & The encoded representation of context \\
$t$ & The target word                            \\ 
$w$ & The encoded representation of word\\
$\bm{g}$ & The encoded representation of the target word's gloss                      \\
$O$ & Observation                            \\ 
$|G\rangle$ & The superposition state representation of the target word's glosses                            \\ 
$cos (\theta)$ & The phase in the interference process                            \\ 
\bottomrule
\end{tabular}
\label{t2}
\end{table}

\textbf{Quantum Evolution.} 
Like classical circuits, quantum circuits can manipulate the state of qubits through quantum gates. Essentially, they can be regarded as unitary evolution from $t$ moment to $t'$:
\begin{equation*}
|\phi'\rangle = U|\phi\rangle.
\end{equation*}
In quantum computing, one focuses on quantum gates, which are also a type of unitary matrix. For example, the most commonly used Pauli gates:
\begin{equation*}
    \begin{split}
    \mathrm{X}=\left(\begin{array}{ll}
0 & 1 \\
1 & 0
\end{array}\right), \mathrm{Y}=\left(\begin{array}{cc}
0 & -i \\
i & 0
\end{array}\right), \mathrm{Z}=\left(\begin{array}{cc}
1 & 0 \\
0 & -1
\end{array}\right).
    \end{split}
\end{equation*}

\textbf{Quantum Measurement.} Finally, given a set of measurement operators \( \{M_{x_i}\} \), the measurement result can be read at the end of the quantum circuit. The probability of measuring the result \( x_i \) is:
\begin{equation}
\label{e-1}
P(x_i) = \langle \phi' | M_{x_i}^\dagger M_{x_i} | \phi' \rangle.
\end{equation}
After measurement, the quantum system collapses to:
$
|\phi'_{x_i}\rangle = \frac{M_{x_i} |\phi'\rangle}{\sqrt{P(x_i)}}.
$
This process is complete because $I=\sum_i M_{x_i}^\dagger M_{x_i}$.
For example, when using the measurement operators \( \{M_0, M_1\} \), the probabilities of measuring 0 and 1 will be:
$
P(0) = \langle \phi' | M_0^\dagger M_0 | \phi' \rangle, \quad P(1) = \langle \phi' | M_1^\dagger M_1 | \phi' \rangle
$.
Clearly, the measurement results are randomly determined by the above probabilities. Therefore, we are more concerned with the deterministic statistical results. For instance, given an observation, the expected value of the quantum circuit can be obtained:
\[
\langle O \rangle = \langle \phi' | O | \phi' \rangle,
\]
where $O$ is an observation that satisfies the Hermitian matrix property (e.g., Pauli operators $\mathrm{Z}$). $\langle \phi' |$ denotes the conjugate transpose of $|\phi' \rangle$ (i.e., $\langle \phi' | = |\phi' \rangle^\dagger$).

\subsection{Quantum Probability Theory}

In classical probability, an event is a subset of the sample space. However, in quantum probability, an event is a subspace in the Hilbert space. As shown in Eq. (\ref{e-1}) above, quantum measurements can calculate the probability of the occurrence of an event. We primarily focus on a subclass of quantum measurements known as projection measurements, which can describe conditional probabilities. Suppose there are two states $|e_1\rangle$ and $|e_2\rangle$ representing independent events $e_1$ and $e_1$, respectively. The conditional probability of the event $e_1$ given the event $e_2$ is:
\begin{equation*}
    P(e_1|e_2)=\langle e_2 |\mathrm{P}_{e_1}| e_2 \rangle = ||\mathrm{P}_{e_1}| e_2 \rangle||^2,
\end{equation*}
where $\mathrm{P}_{e_1}=| e_1 \rangle\langle e_1 |$ denotes a measurement operator. Given any two quantum states $|e_1\rangle$ and $|e_2\rangle$, the inner product and outer product can be defined as $\langle e_1|e_2\rangle$ and $|e_1\rangle \langle e_2|$, respectively.  We have described the notations used in this paper in Table \ref{t2}. It is important to note that the implementation of quantum probability does not require specific quantum hardware. These properties of quantum probability can be manually implemented on classical computers. However, on quantum circuits, these properties will naturally be realized without manual implementation.

\section{Quantum Visual Word Sense Disambiguation}\label{c4}
In this section, we will introduce Q-VWSD and derive its formal representation, demonstrating that it is a quantum extension of classical probability methods. Based on this, we also propose an equivalent quantum-inspired version of Q-VWSD to better analyze the model and move towards practical applications.

\subsection{Task Definition of Unsupervised VWSD}
VWSD requires selecting the 
optimal image $v\in V$ from a set of candidates $V$. The chosen image $v$ should most accurately reflect the meaning of the target word $t$ within its given context $c$. Therefore, the task requires the consideration of a conditional probability:
\begin{equation}\label{e1}
    v = \arg\max_{v \in V}  P(v|c),
\end{equation}
where $t \in c$.
In unsupervised learning scenarios, a conventional approach is to separately encode image and text features. The posterior probability is then computed based on the similarity scores between the image and text features.

\subsection{Quantum Inference Circuit}

Q-VWSD can be implemented through a quantum inference circuit. The overall structure of the quantum inference circuit is shown in Fig. \ref{fig3}. It can be divided into two parts: constructing superposition state representations and measuring these superposition states.

\textbf{Representation.} Given a set of glosses $G = \{g_1, g_2, \dots, g_n\}$, we use a pre-trained text encoder to encode each gloss and utilize the encoding corresponding to the $[CLS]$ as the output for each gloss:
\begin{equation}
    \begin{split}
            \bm{g}_1,\bm{g}_2,\dots,\bm{g}_n = &TextEncoder(g_1,g_2,\dots,g_n),
            \end{split}
\end{equation}
where $g_i=\{[CLS],w_{i1},w_{i2}, \dots , w_{il}\}$. The gloss is then normalized to ensure that the representation satisfies the properties of quantum states:
\begin{equation}\label{e3}
           |g_i\rangle = \frac{\bm{g}_i}{\sqrt{{\bm{g}_i}^2}}.
\end{equation}

According to the principle of state superposition, the multiple glosses are represented as the superposition of each gloss based on the probability amplitude:
\begin{equation}\label{e4}
           |G\rangle = a_1|g_1\rangle + a_2|g_2\rangle + \dots + a_n|g_n\rangle,
\end{equation}
where $[a_1,a_2,\dots,a_n]$ denotes the set of probability amplitude satisfying  $\sum_i |a_i|^2=1$. 
After preparing the superposition state 
$|G\rangle$, we can encode it into the quantum circuit using amplitude encoding techniques \cite{nakaji2022approximate}. 

\begin{figure}[t]
    \centering
    \includegraphics[width=0.7\linewidth]{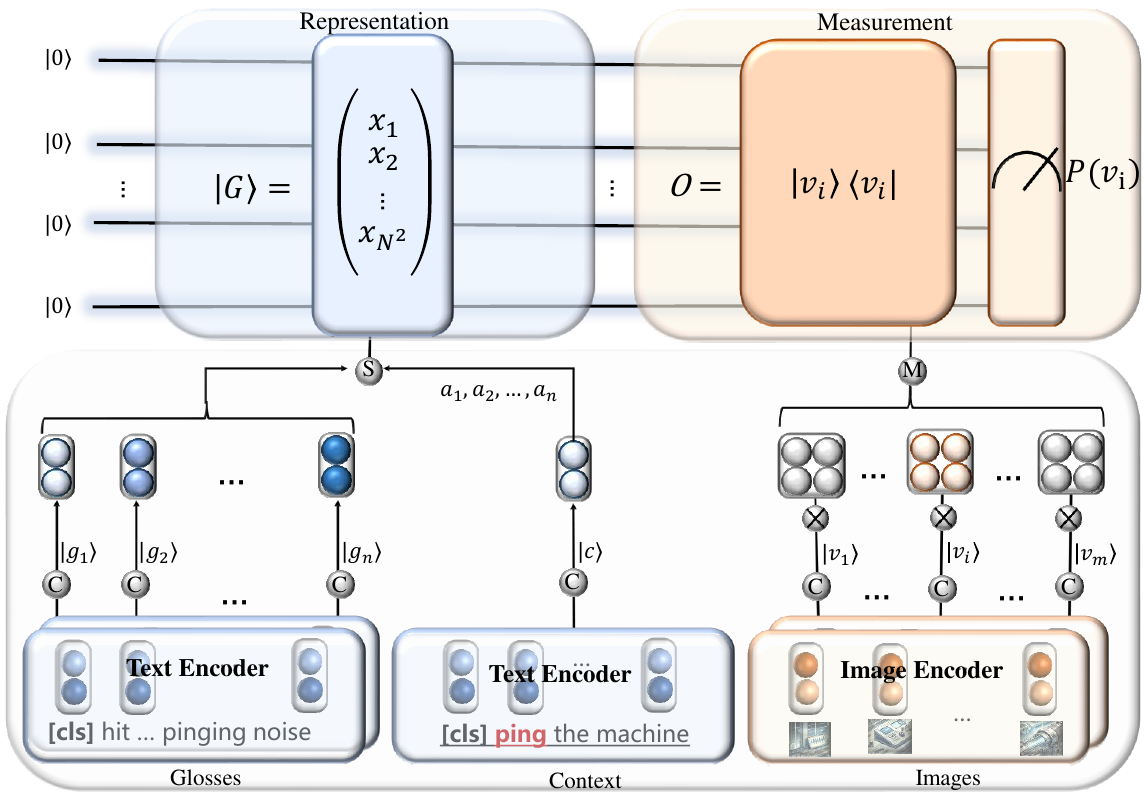}
    \caption{Main structure of quantum inference circuit (Q-VWSD$_{QC}$). \textbf{C} denotes Eq. (\ref{e3})/(\ref{e6}) and indicates the initialization of a quantum state. \textbf{S} denotes Eq. (\ref{e4}) and indicates the construct of the superposition states of the glosses. \textbf{M} denotes Eq.(\ref{e10}) and indicates the construct of the observation.}
   \label{fig3}
\end{figure}

The probability amplitude serves a role similar to a weighting factor, acting as a superposition coefficient. Intuitively, the more relevant a gloss is to the context, the larger the coefficient should be assigned. Therefore, we need to encode the relationship between context and gloss further. Similarly, given a context $c=[[CLS],w_1,w_2,\dots,w_l]$ and $t\in c$, we can encode the context information via a pre-trained text encoder, and the encoding corresponding to the target word is the output:
\begin{equation}
    \bm{c} = TextEncoder([[CLS],w_1,w_2,\dots,w_l]|t).
\end{equation}
Then, we normalize the encoding so that it satisfies the properties of quantum states:
\begin{equation}\label{e6}
    |c\rangle = \frac{\bm{c}}{\sqrt{\bm{c}^2}}.
\end{equation}
The correlation between the $i$-th gloss and context can be calculated by normalizing the inner product:
\begin{equation}\label{e7}
    P(g_i|c)=a_i^2 =\frac{|\langle g_i | c\rangle|^2}{\sum_i |\langle g_i | c\rangle|^2}.
\end{equation}
This process acts like an attentional mechanism that provides a weighted prior about the context for the superposition of glosses.

\textbf{Measurement.} Given a set of candidate images $V=\{v_1, v_2,\dots,v_m\}$, we obtain a representation of the image using a pre-trained image encoder:
\begin{equation}
\bm{v}_1,\bm{v}_2,\dots,\bm{v}_m = ImageEncoder(v_1,v_2,\dots,v_m).
\end{equation}
Each image will be normalized and then undergo an outer product operation to form the corresponding observation:
\begin{equation}\label{e9}
           |v_i\rangle = \frac{\bm{v}_i}{\sqrt{{\bm{v}_i}^2}},  \quad 
    O_{v_i}=|v_i\rangle \langle v_i|.
\end{equation}
By executing the quantum circuit, we can obtain the expectation value under a given observation:
\begin{equation}\label{e10}
           \langle O_{v_i} \rangle =\langle G|O_{v_i}|G\rangle.
\end{equation}
Since this observation is also a projection operator, the expectation value output by the quantum circuit represents the posterior probability of the image in relation to context:
\begin{equation}\label{e10}
           v = \arg\max_{v \in V}  P'(v|c) = ||\mathrm{P}_{v}|G\rangle||^2=\langle G|O_v|G\rangle .
\end{equation}
The image corresponding to the observation that maximizes the posterior probability is the result for VWSD \footnote{$||\mathrm{P}_{v}|G\rangle||^2=\langle G|v\rangle\langle v|v\rangle \langle v|G\rangle=\langle G|O_v|G\rangle$}.

\begin{figure*}[t]
    \centering
    \includegraphics[width=1\linewidth]{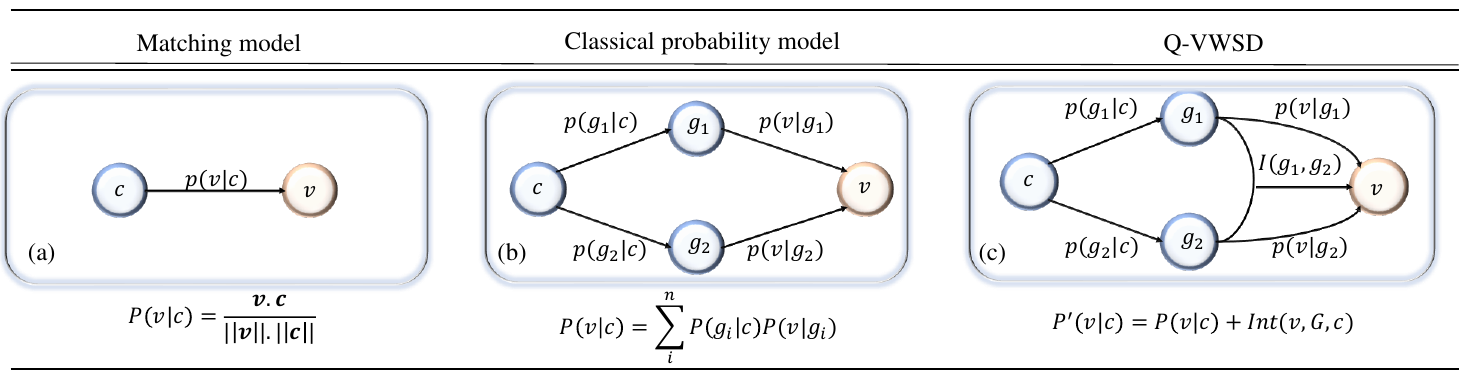}
    \caption{Q-VWSD vs. classical methods. (a) Traditional image-text matching directly computes the cosine similarity. (b) The classical probabilistic model uses  the law of total probability and applies Bayesian inference to accumulate evidence, thereby obtaining the posterior probability. (c) Our Q-VWSD employs quantum probability, and the ``interference effects" between different glosses must also be considered.}
    \label{ff3}
\end{figure*}

\subsection{Formalization of Quantum Inference Circuit}

To explain the advantages of Q-VWSD, we formally analyze the quantum inference model using quantum probability.
The matching method is illustrated in Fig. \ref{ff3} (a). A typical example is common image-text matching methods. However, these methods cannot be directly applied to VWSD because they overlook the latent variables associated with the ambiguity of the target words. From the perspective of classical probability, VWSD can be understood as a weighted summation process, as shown in Fig. \ref{ff3} (b). For instance, \cite{DBLP:conf/acl/KwonGL0023} calculates Eq. (\ref{e1}) using the law of total probability as follows:
\begin{equation}
    \begin{split}\label{e11}
            P(v|c) &= \sum_i^n P(g_i|c)P(v|g_i).
    \end{split}
\end{equation}

From the perspective of quantum probability, the posterior probability of an answer image can be viewed as a projective measurement process. Q-VWSD thus has the following properties:
\begin{claim} \label{claim1}
Q-VWSD violates the law of total probability and is a quantum generalization of the VWSD methods based on classical probability. 
\end{claim}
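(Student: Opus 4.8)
The plan is to expand the quantum posterior $P'(v|c)=\langle G|O_v|G\rangle$ directly and isolate the sub-expression that coincides with the classical formula in Eq.~(\ref{e11}), leaving an explicit remainder that I will identify as quantum interference. First I would substitute the superposition state of Eq.~(\ref{e4}) and use that $O_v=|v\rangle\langle v|$ is a rank-one projector, so that
\begin{equation*}
P'(v|c) = \langle G|v\rangle\langle v|G\rangle = |\langle v|G\rangle|^2 = \Bigl|\sum_{i=1}^{n} a_i\,\langle v|g_i\rangle\Bigr|^2 = \sum_{i,j} a_i^{*}a_j\,\langle g_i|v\rangle\langle v|g_j\rangle .
\end{equation*}
Next I would split this double sum into its diagonal ($i=j$) and off-diagonal ($i\neq j$) parts. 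For the diagonal part I would invoke two identifications already available in the excerpt: from Eq.~(\ref{e7}), $|a_i|^2=a_i^2=P(g_i|c)$, and from the projective-measurement reading behind Eq.~(\ref{e10}), $|\langle v|g_i\rangle|^2=\langle g_i|O_v|g_i\rangle=P(v|g_i)$. This shows the diagonal part reproduces exactly the classical law of total probability,
\begin{equation*}
\sum_{i=1}^{n} |a_i|^2\,|\langle v|g_i\rangle|^2 = \sum_{i=1}^{n} P(g_i|c)\,P(v|g_i),
\end{equation*}
which is precisely the right-hand side of Eq.~(\ref{e11}).

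The remaining off-diagonal terms constitute the interference contribution. Writing them in polar form and collecting conjugate pairs, I would present the final decomposition
\begin{equation*}
P'(v|c) = \sum_{i=1}^{n} P(g_i|c)\,P(v|g_i) \;+\; 2\sum_{i<j} a_i a_j \sqrt{P(v|g_i)\,P(v|g_j)}\,\cos(\theta_{ij}),
\end{equation*}
where $\cos(\theta_{ij})$ encodes the relative phase between the two gloss projections onto $|v\rangle$, matching the $\cos(\theta)$ notation of Table~\ref{t2}. Because this cross term is not identically zero — it vanishes only in degenerate cases such as mutually orthogonal gloss projections onto $|v\rangle$ or a fully decohered (diagonal) state — the quantum posterior differs in general from the classical one, which establishes the violation of the law of total probability. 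Conversely, in exactly those degenerate limits the interference term disappears and $P'(v|c)$ collapses onto Eq.~(\ref{e11}); this is what lets me conclude that Q-VWSD contains the classical method as a special case and is therefore a strict quantum generalization of it.

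I expect the main obstacle to be the rigorous justification of the two diagonal identifications rather than the algebra itself. In particular, establishing $|\langle v|g_i\rangle|^2=P(v|g_i)$ requires treating each single-gloss term as a genuine projective measurement consistent with the conditional probability used in Eq.~(\ref{e11}), and the identity $|a_i|^2=P(g_i|c)$ relies on the amplitudes being real and normalized as in Eq.~(\ref{e7}); I would state these assumptions explicitly at the outset. A secondary subtlety is to make the phrase \emph{violates the law of total probability} precise: instead of merely asserting genericity, I would exhibit a concrete configuration of glosses and an image for which $\cos(\theta_{ij})\neq 0$ and the glosses are non-orthogonal, so that the interference term is demonstrably nonzero and the two posteriors provably disagree.
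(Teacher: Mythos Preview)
Your proposal is correct and follows essentially the same route as the paper: expand $\langle G|O_v|G\rangle$ via the superposition in Eq.~(\ref{e4}), split the resulting double sum into diagonal and off-diagonal parts, identify the diagonal with Eq.~(\ref{e11}) using $a_i^2=P(g_i|c)$ and $|\langle v|g_i\rangle|^2=P(v|g_i)$, and read the off-diagonal cross terms as the interference correction $Int(v,G,c)$ carrying the phase $\cos(\theta_{ij})$. Your rewriting of $|\langle g_i|\mathrm{P}_v|g_j\rangle|$ as $\sqrt{P(v|g_i)P(v|g_j)}$ and your explicit remarks on when the interference term vanishes are cosmetic refinements of the same argument, not a different approach.
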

\begin{proof}
Recalling Eq. (\ref{e10}) again, we obtain:
\begin{equation}
    \begin{split}\label{e12}
            P^{'}(v|c)  &= ||\mathrm{P}_v|G\rangle ||^2\\
&=||\mathrm{P}_v(a_1|g_1\rangle +a_2|g_2\rangle +\dots+a_n|g_n\rangle )||^2\\
&=\sum\nolimits_i^n(\langle g_i|a_i^*)\mathrm{P}_v\sum\nolimits_j^n(a_j|g_j\rangle)\\
&=   \sum\nolimits_i^n \sum\nolimits_j^n a_i^* a_j\langle g_i|\mathrm{P}_v|g_i\rangle\\
&=  \sum\nolimits_i^n a_i^2 \langle g_i|\mathrm{P}_v|g_i\rangle+\sum\nolimits_{j\neq i}^n a_i^* a_j\langle g_i|\mathrm{P}_v|g_j\rangle\ \\
&=  \sum\nolimits_i^n a_i^2 \langle g_i|\mathrm{P}_v|g_i\rangle\\
&+\sum\nolimits_{j\neq i}^n 2a_i a_j|\langle g_i|\mathrm{P}_v|g_j\rangle|cos(\theta_{ij}).
    \end{split}
\end{equation}
Here $cos (\theta_{ij})$ is the phase of $\langle g_i|\mathrm{P}_v|g_j\rangle$. $\langle g_i|\mathrm{P}_v|g_i\rangle$ denotes the conditional probability $P(v|g_i)$ that the gloss $g_i$ projects onto the image $v$. Since $a_i$ is real, $a_i$ equals its conjugate complex number $a_i^*$. Therefore, bringing $P(v|g_i)=\langle g_i|\mathrm{P}_v|g_i\rangle$ and $P(g_i|c)=a_i^2$ into Eq. (\ref{e12}) has:
\begin{equation}
\begin{split}\label{e13}
 P^{'}(v|c) &= \sum\nolimits_i^n P(g_i|c)P(v|g_i)+Int(v,G,c)\\
 & = P(v|c) + Int(v,G,c).
 \end{split}
\end{equation}
By inspecting Eq. (\ref{e13}), it is found that the posterior probability produces an additional interference term $Int(v,G,c)=\sum\nolimits_{j\neq i}^n 2a_i a_j|\langle g_i|\mathrm{P}_v|g_j\rangle|cos(\theta_{ij})$, which indicates that Q-VWSD violates the law of total probability. By comparing with Eq. (\ref{e11}), we observe that Q-VWSD reduces to a classical probability method when the interference term equals zero. Thus, Q-VWSD can be regarded as a quantum generalization of the classical method. Fig. \ref{ff3} (c) illustrates the differences between Q-VWSD and classical methods.
\end{proof}

\textbf{\textit{Analysis}:} The interference term is the main factor that differentiates Q-VWSD from the classical methods. The classical method directly produces explicit static semantics from specific glosses, which are easily affected by semantic bias in the glosses.
But, Q-VWSD encodes glosses as superposition states in a quantum circuit, creating interference effects during measurement. The effect captures semantic interactions within glosses, dynamically adjusting semantics based on observations and thereby reducing semantic bias \cite{zhang2024quantum}. In quantum cognition \cite{busemeyer2012quantum},  interference effects can also be used to explain various human cognitive phenomena. For example, humans often do not linearly consider each possible meaning of the target words and do not accumulate evidence for image-text matching. Instead, after reading the text, they simultaneously comprehend multiple meanings of the target word and make a rapid judgment to draw a conclusion based on intuition. Therefore, in VWSD, the quantum probability model may better align with human cognitive phenomena than the classical probability model.

\subsection{Quantum-Inspired Inference Model}

\begin{figure*}[t]
    \centering
    \includegraphics[width=0.9\linewidth]{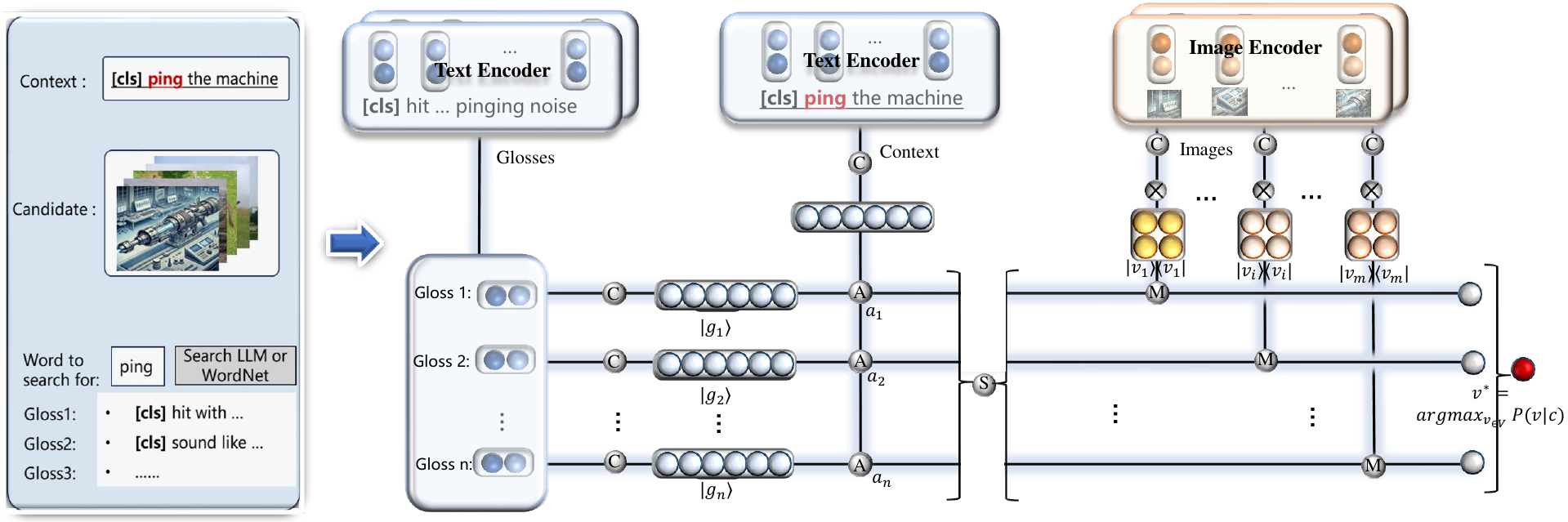}
    \caption{Main structure of quantum-inspired inference model (Q-VWSD$_{QI}$). \textbf{C} denotes Eq. (\ref{e3})/(\ref{e6}) and indicates the initialization of a quantum state. \textbf{A} denotes Eq. (\ref{e7}) and refers to calculating the probability amplitude. \textbf{S} denotes Eq. (\ref{e4}) and indicates the construct of the superposition states of the glosses. \textbf{M} denotes Eq.(\ref{e10}) and indicates quantum measurement.}
    \label{f1}
\end{figure*}

As seen from Eq. (\ref{e13}), the advantages of the quantum inference circuit lie not in computation but in modeling. This naturally leads us to design a quantum-inspired model based on the principles of Q-VWSD. This model can explicitly model interference terms, allowing for more flexible execution on classical computers. To distinguish between the models, we denote our model based on the quantum inference circuit as Q-VWSD$_{QC}$ and the quantum-inspired inference model as Q-VWSD$_{QI}$. Since these two models are theoretically equivalent, we can use the latter in ablation experiments to analyze the properties when analyzing the former becomes challenging. 

As shown in Fig. \ref{f1}, we can calculate the mathematical form of Q-VWSD through manual analytical solutions. The process can be directly simplified as follows:
\begin{equation}
    \begin{split}\label{e14}
            P^{'}(v|c)  
&=  \sum\nolimits_i^n a_i^2 \langle g_i|\mathrm{P}_v|g_i\rangle\\
&+\sum\nolimits_{j\neq i}^n 2a_i a_j|\langle g_i|\mathrm{P}_v|g_j\rangle|cos(\theta_{ij}).
    \end{split}
\end{equation}
Yet, to implement Q-VWSD$_{QI}$ according to Eq. (\ref{e14}), it is necessary to model the phase $\cos(\theta_{ij})$ within the interference terms explicitly.
The phase influences the interference effect and can convey important information such as emotions or sarcasm \cite{gkoumas2021quantum}, or even positional information \cite{zhang2022complex}. When the model only considers the real-valued vector of $|g_i\rangle$, $\cos(\theta_{ij}) = 1$. When the model considers the complex-valued vector of $|g_i\rangle$, $\cos(\theta_{ij}) \in [-1, 1]$. Since the complex-valued features only influence the final phase, we can manually assign specific values to the phases without needing to encode the semantics into the high-dimensional complex-valued Hilbert space. For simplicity, following some works \cite{busemeyer2012quantum}, we directly use the cosine similarity between glosses to represent $\cos(\theta_{ij})$. Under this setting, when $|g_j\rangle$ and $|g_i\rangle$ are orthogonal, the interference term equals zero, causing Q-VWSD to degenerate into the classical probability method. This indicates that in classical probability, semantic bias in glosses can only be avoided by employing completely orthogonal glosses with no semantic overlap. However, in practical scenarios, glosses of target words are rarely completely independent. This highlights the necessity of designing Q-VWSD.

\subsection{Glosses from Large Language Models}
In addition to the semantic bias of the glosses, their completeness will also affect the disambiguation effects. For example, when an out-of-vocabulary phenomenon occurs, the glosses become incomplete.
Theoretically, large language models have wider knowledge coverage and more complete semantic information, so we further utilize the annotation capability of large language models to generate glosses. 
As mentioned earlier, since the generated glosses are not guided by human experts, they may contain more significant semantic biases. Therefore, directly using glosses from large language models in classical probability methods is challenging.
In contrast, Q-VWSD can alleviate the impact of semantic bias and leverage the more complete semantic features within large language models.
Note that this differs from methods \cite{kritharoula2023large,dadas2023opi} using large language models for VWSD, where context and images are provided. In our approach, large language models are used solely as a dictionary without context or image inputs. This reasoning process is both interpretable and efficient. The prompt templates we use are as follows:

\begin{tcolorbox}[colback=gray!10!white, colframe=black, title=Prompt]
\textbf{Content:}\\

You are an excellent dictionary. Please provide at least five concise and precise interpretations of a given word. Provide glosses that do not overlap as much as possible.\\

\textbf{Input format:} \\

[$word_1$],[$word_2$],...,[$word_n$]\\

\textbf{Output format:} \\

        {$word_1$}: [[$gloss_1$], [$gloss_2$],..., [$gloss_5$]], \\
        {$word_2$}: [[$gloss_1$], [$gloss_2$],..., [$gloss_5$]], \\
                       ..., \\
        {$word_n$}: [[$gloss_1$], [$gloss_2$], ..., [$gloss_5$]].\\

\end{tcolorbox}

\begin{figure*}[t]
    \centering
    \includegraphics[width=0.9\linewidth]{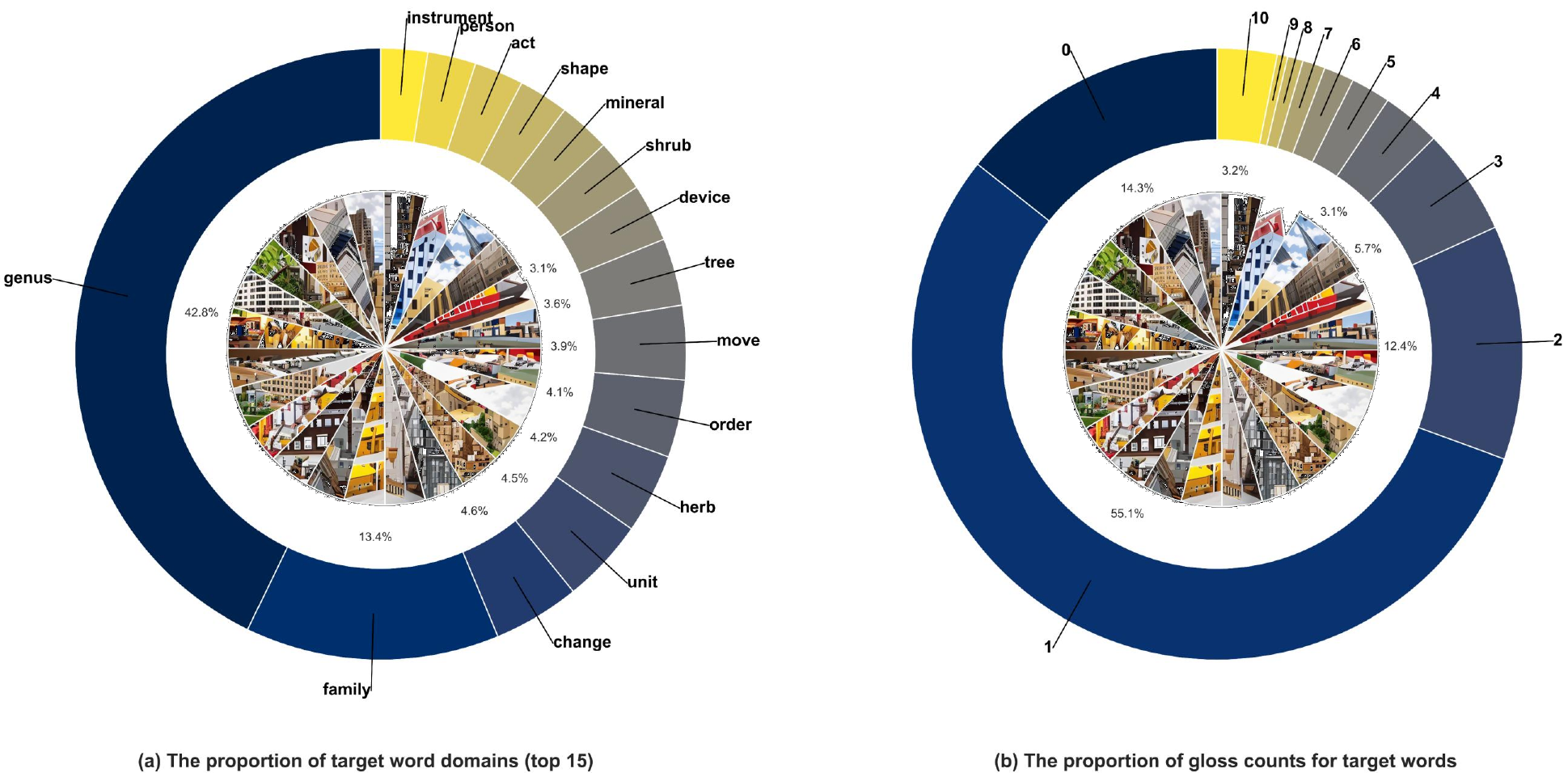}
    \caption{Presentation of the SE23 dataset overview. (a) The proportion of target words in the domain of WordNet. (b) The proportion of the number of glosses of target words by WordNet.
}
    \label{f5}
\end{figure*}

\section{Experiments}\label{c5}

In this section, we will conduct experimental validation, focusing on addressing the following research questions:

 \begin{itemize}
     \item \textbf{RQ 1:} Can Q-VWSD surpass SOTA models based on classical probability, and can it alleviate semantic bias?
     \item\textbf{RQ 2:} What is Q-VWSD's performance in the face of polysemous words?
     \item\textbf{RQ 3:} Which components in Q-VWSD are most crucial to its performance?
    \item\textbf{RQ 4:} How does Q-VWSD correct the disambiguation results of classical methods?
 \end{itemize}

\subsection{Setup}
All experiments run on a single NVIDIA A30 GPU. Since powerful quantum computing hardware is still limited, we used open-source quantum computing simulation libraries PennyLane \cite{bergholm2018pennylane} for Q-VWSD$_{QC}$. We used the basic-scale \footnote{https://huggingface.co/openai/clip-vit-base-patch32} and large-scale \footnote{https://huggingface.co/openai/clip-vit-large-patch14} versions of the pre-trained model CLIP to encode both images and text. \cite{DBLP:conf/icml/RadfordKHRGASAM21}, respectively. The large version's output reaches 768 dimensions, requiring the simulation of at least a 10-qubit quantum circuit to process this output. This makes it difficult for Q-VWSD$_{QC}$ to run on our current hardware. Therefore, we only use the base version in Q-VWSD$_{QC}$. However, Q-VWSD$_{QI}$ doesn't face this limitation.

Specifically, the pre-trained model we used is based on the publicly available CLIP model from Hugging Face. Q-VWSD$_{QI}$ was run in a Windows environment with Pytorch 1.13, while Q-VWSD$_{QC}$ was executed in a Linux environment, using Pytorch 1.13 and PennyLane 0.31.1. Since the task is unsupervised inference, there was no need to set random seeds or account for the impact of multiple runs. It also does not involve hyperparameter tuning, as all possible parameters are set to their default values. External knowledge was sourced from WordNet 3.1, GPT-3.5-turbo, and GPT-4.0.


\begin{table*}[h]
\centering
\caption{Main experimental results.}
\renewcommand\arraystretch{1.2}
\tabcolsep=0.5cm
\begin{threeparttable}

\begin{tabular}{clccccc}
\hline
\multicolumn{2}{c}{\multirow{3}{*}{\textbf{Model}}} & \multirow{3}{*}{\textbf{Source of Glosses}} & \multicolumn{4}{c}{\textbf{SE 23}}                                       \\ \cmidrule(lr){4-7}
\multicolumn{2}{c}{}                                &                                  & \multicolumn{2}{c}{\textbf{Base}} & \multicolumn{2}{c}{\textbf{Large}} \\ \cmidrule(lr){4-5} \cmidrule(lr){6-7}

\multicolumn{2}{c}{}                                &                                  & Hits@1             & MRR              & Hits@1             & MRR             \\ \hline \hline
\multicolumn{2}{c}{CLIP \cite{DBLP:conf/icml/RadfordKHRGASAM21}}                            & -                                & 72.85            & 82.66            & 81.33            & 88.11           \\
\multicolumn{2}{c}{FLAVA \cite{DBLP:conf/cvpr/SinghHGCGRK22}}                           & -                                & 70.13            & 80.67            & -                & -               \\ \hline
\multicolumn{2}{c}{\multirow{5}{*}{CIM \cite{DBLP:conf/acl/KwonGL0023}} }     & WN \cite{fellbaum1998wordnet}                               & 81.98            & 88.83            & 88.06
               &  92.60               \\
\multicolumn{2}{c}{}                                & DG \cite{DBLP:conf/naacl/MalkinLGRJ21}                               & 81.64            & 88.33            & 86.84
                & 91.64
              \\
\multicolumn{2}{c}{}                                & CADG \cite{DBLP:conf/acl/KwonGL0023}                            & 82.65            & 89.28             & 87.83
                & 92.38              \\
\multicolumn{2}{c}{}                                & WN+CADG \cite{DBLP:conf/acl/KwonGL0023}                          & 83.39            & 89.80            & 88.62            & 92.96           \\ 
\multicolumn{2}{c}{}                                & LLM \cite{achiam2023gpt}                         & 77.43            & 85.71            &  86.24            & 91.47           \\ 

\hline

\multicolumn{2}{c}{\multirow{2}{*}{Q-VWSD$_{QC}$ (ours)}} & WN+CADG                          & 84.28 & 90.37            & -            & -           \\
\multicolumn{2}{c}{}                                &LLM                           & \underline{84.61}   & \underline{90.58}   & -   & -   \\\hline

\multicolumn{2}{c}{\multirow{2}{*}{Q-VWSD$_{QI}$ (ours)}} & WN+CADG                          & 84.06            & 90.26            & \underline{88.99}            & \underline{93.26}           \\
\multicolumn{2}{c}{}                                &LLM                           & \textbf{84.99}   & \textbf{90.80}   & \textbf{89.63}   & \textbf{93.59}   \\\hline
\end{tabular}
The best results are in \textbf{bold}, and the second best results are in \underline{underlined}.
\end{threeparttable}
\label{t3}
\end{table*}

\subsection{Datasets}
The dataset \textbf{SE23} we utilized originates from SemEval-2023 VWSD Task \footnote{https://semeval.github.io/SemEval2023/tasks.html}. It consists of 12,896 samples, each consisting of a phrase containing the target word, expressed primarily in English, and 10 candidate images.
Using WordNet \footnote{https://www.nltk.org/howto/wordnet.html}, we further compiled the synsets for each target word and recorded their respective domains. The proportion of each domain within the entire dataset is presented in Fig. \ref{f5} (a). Additionally, We counted the number of glosses for the target words in WordNet, considering this as the number of meanings of each word (i.e., $|G|=s$). The proportion of the number of meanings for the target words in the dataset is shown in Fig. \ref{f5} (b).

\subsection{Baselines}
To our knowledge, there are currently no quantum models available for VWSD. So we chose two representative classes of classical models. 
\begin{itemize}
    \item \textbf{The direct matching method}: \textbf{CLIP} \cite{DBLP:conf/icml/RadfordKHRGASAM21} and \textbf{FLAVA} \cite{DBLP:conf/cvpr/SinghHGCGRK22} are commonly used visual language models that are pre-trained based on contrastive learning. Among them, FLAVA focuses on finer-grained matching information.
    \item \textbf{The classical inference method}: There are few models based on unsupervised VWSD, and the representative work \textbf{CIM} comes from \cite{DBLP:conf/acl/KwonGL0023}. The method is a state-of-the-art model for unsupervised VWSD and can be viewed as a classical version of Q-VWSD.
\end{itemize}
To further explore the impact of gloss on the model, we chose two types of glosses as baselines. These are divided into specialized and non-specialized glosses.
\begin{itemize}    
    \item \textbf{Non-specialized glosses}: \textbf{DG} \cite{DBLP:conf/naacl/MalkinLGRJ21} generates glosses directly from the language model. \textbf{CADG} \cite{DBLP:conf/acl/KwonGL0023} generates glosses from the language model while given the context. \textbf{LLM} generates five glosses from GPT 4.0 \cite{achiam2023gpt} without any given context. 
    \item \textbf{Specialized glosses}: WordNet (\textbf{WN}) \cite{fellbaum1998wordnet} is a language resource tool that combines the knowledge of experts. \textbf{CADG+WN} \cite{DBLP:conf/acl/KwonGL0023} denotes that most of the glosses are sourced with WN, while those words not included in WN are generated by CADG.
\end{itemize}
\subsection{Evaluations}
We follow the evaluation metrics used by the SemEval-2023 VWSD Task:
\begin{itemize}
\item The Mean Reciprocal Ranking (MRR) is given by: 
\begin{equation*}
\textbf{MRR} = \frac{1}{N} \sum_{i=1}^N \frac{1}{\text{rank}_i},
\end{equation*}
where \(N\) is the number of samples and \(\text{rank}_i\) represents the ranking of the posterior probability of the image selected by the model. 

\item The Hit Rate 1 (Hit@1) is calculated as:
\begin{equation*}
\textbf{Hit@1} = \frac{\sum_{i=0}^{N-1} \left( v^*_i == v_{r,i} \right)}{N},
\end{equation*}
where \(v_i\) is the image selected by the model and \(v_{r,i}\) is the correct image.
\end{itemize}

\subsection{Overall Results}
In this subsection, we first aim to address \textbf{RQ 1}: whether Q-VWSD surpasses the baseline methods based on classical probabilities. The results are presented in Table \ref{t3}. Firstly, we observe that matching models like CLIP and FLAVA perform the worst, indicating that the task cannot be simply regarded as a conventional image-text retrieval. Secondly, both Q-VWSD$_{QC}$ and Q-VWSD$_{QI}$ outperform CIM even when using identical glosses. This suggests that introducing a quantum model is beneficial. Finally, Q-VWSD$_{QC}$ and Q-VWSD$_{QI}$ exhibited similar performance, but Q-VWSD$_{QI}$, which is based on heuristic principles, offers greater flexibility, particularly in leveraging the large-scale version of the pre-trained model. As a result, it slightly outperforms Q-VWSD$_{QC}$.

Regarding \textbf{RQ 1}, the effect of semantic bias in glosses can be further illustrated by comparing the performance of CIM and Q-VWSD using LLM and WN+CADG glosses, as shown in Table \ref{t3}. Although large language models are widely regarded as very powerful, the glosses they generate may have severe semantic bias. This leads to CIM using LLM glosses being less effective than WN+CADG glosses. This is because WN+CADG glosses are more specialized and incorporate expert knowledge and a priori information, resulting in fewer semantic biases. In contrast, Q-VWSD somewhat avoids bias through the quantum superposition feature, enabling it to perform better with WN+CADG glosses. Moreover, our method can utilize the rich but biased semantic knowledge from LLM glosses to enhance performance further, which CIM cannot achieve. This demonstrates the advantage of Q-VWSD.

\begin{table*}[th]
    \caption{Accuracy for CIM-WN+CADG, Q-VWSD$_{QI}$-CADG+WN, and Q-VWSD$_{QI}$-LLM.} 
\renewcommand\arraystretch{1.2}
\tabcolsep=0.3cm

\centering
\begin{threeparttable}
\begin{tabular}{cccccccccc}
\hline
Model                     & \multicolumn{3}{c}{CIM-WN+CADG}                      & \multicolumn{3}{c}{Q-VWSD$_{QI}$-CADG+WN}                       & \multicolumn{3}{c}{{Q-VWSD$_{QI}$-LLM}}              \\ 
\cmidrule(lr){2-4} \cmidrule(lr){5-7} \cmidrule(lr){8-10}
$s$                       & Right & Wrong & Accuracy & Right & Wrong & Accuracy &Right & Wrong & Accuracy \\ \hline \hline
2                         & 1316          & 277             & 0.82            & 1325          & 268             & \underline{0.83}            & 1364          & 229             & \textbf{0.85}            \\
3                         & 556           & 173             & 0.76            & 602           & 127             & \underline{0.82}            & 608           & 121             & \textbf{0.83}            \\
4                         & 315           & 84              & 0.78            & 328           & 71              & \underline{0.82}            & 339           & 60              & \textbf{0.84}            \\
5                         & 207           & 63              & 0.76            & 228           & 42              & \underline{0.84}            & 235           & 35              & \textbf{0.87}            \\
6                         & 146           & 59              & 0.71            & 173           & 32              & \underline{0.84}            & 182           & 23              & \textbf{0.88}            \\
7                         & 120           & 34              & 0.77            & 132           & 22              & \underline{0.85}            & 136           & 18              & \textbf{0.88}            \\
8                         & 83            & 23              & 0.78            & 89            & 17              & \textbf{0.83}            & 87            & 19              & \underline{0.82}            \\
9                         & 59            & 12              & \underline{0.83}            & 57            & 14              & 0.80            & 64            & 7               & \textbf{0.90}            \\
$\geq10$                       & 252           & 155             & 0.61            & 344           & 63              & \textbf{0.84}            & 336           & 71              & \underline{0.82}            \\ \hline
\multicolumn{1}{l}{total} & 1568          & 432             & 0.78            & 1669          & 331             & \underline{0.83}            & 1700          & 300             & \textbf{0.85}            \\ \hline
\end{tabular}
 The accuracy is calculated by $Acc.=\frac{\# \ of \ right}{\# \ of \ right+\#  \ of \  wrong}$, where $\# \ of \ right$ and  $\# \ of \ wrong$ denote the number of right disambiguation and the number of wrong disambiguation, respectively. The samples are randomly sampled from polysemy.
\end{threeparttable}
    \label{t5}
\end{table*}

 \begin{figure*}
    \centering
    \includegraphics[width=1\linewidth]{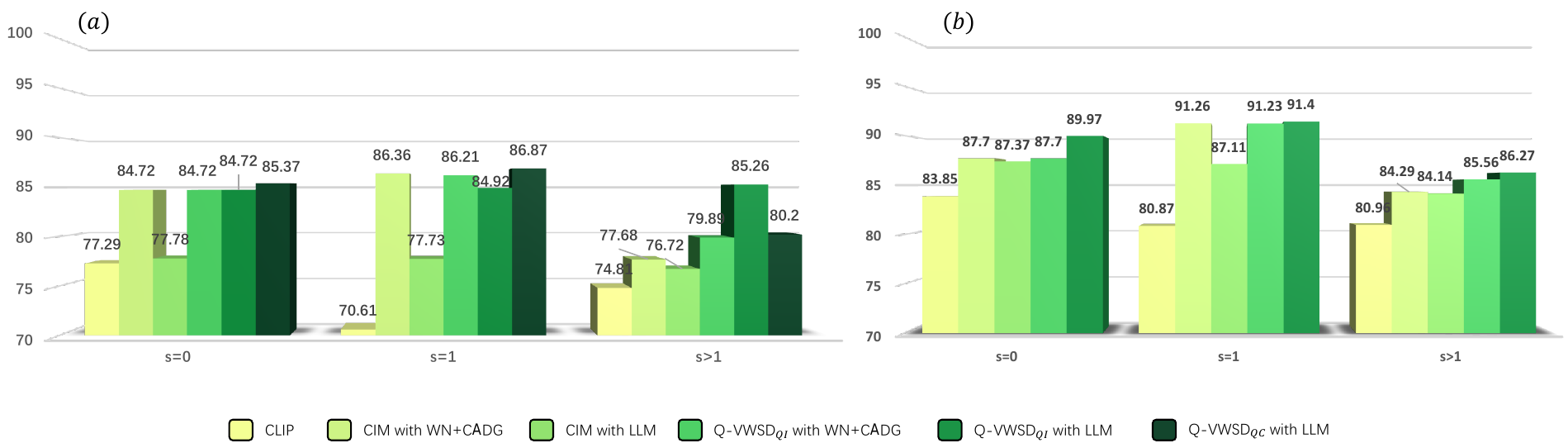}
    \caption{Hits@1 for different methods. (a) shows the performance of the base version of CLIP, while (b) is the larger version.}
    \label{f6}
\end{figure*}

\begin{table}[]
\renewcommand\arraystretch{1.2}
\tabcolsep=0.4cm
\centering
\caption{The results of ablation analysis.}
\begin{threeparttable}
\begin{tabular}{ccccc}
\hline
\multirow{2}{*}{Model} & \multicolumn{2}{c}{Base} & \multicolumn{2}{c}{Large} \\  \cmidrule(lr){2-3} \cmidrule(lr){4-5}
                       & Hits@1        & MRR          & Hits@1        & MRR          \\ \hline \hline
$cos (\theta) = 1$                & 83.97         & 90.22        & 88.93         & 93.22        \\
$cos (\theta) \in [-1,1]$                     & 84.06         & 90.26        & 88.99        & 93.26       \\ \hline \hline
GPT 3.5                     &  84.48
         & 90.53        &  89.46
         &  93.57        \\ 
         GPT 4.0                & \textbf{84.99}   & \textbf{90.80}   & \textbf{89.63}&\textbf{93.59}\\ \hline
\end{tabular}
The best results are in \textbf{bold}, and the second best results are in \underline{underlined}.
\end{threeparttable}
\label{t4}
\end{table}

\begin{table*}[h]
    \caption{Correction rates for CIM-WN+CADG, Q-VWSD$_{QI}$-CADG+WN, and Q-VWSD$_{QI}$-LLM relative to CIM-WN.} 
\renewcommand\arraystretch{1.2}
\tabcolsep=0.02cm

\centering
\begin{threeparttable}
\begin{tabular}{cccccccccc}
\hline
Model                     & \multicolumn{3}{c}{CIM-WN+CADG}                      & \multicolumn{3}{c}{Q-VWSD$_{QI}$-CADG+WN}                       & \multicolumn{3}{c}{{Q-VWSD$_{QI}$-LLM}}           \\ 
\cmidrule(lr){2-4} \cmidrule(lr){5-7} \cmidrule(lr){8-10}
$s$                       & Corrected & Incorrected & Corrected Ratio & Corrected & Incorrected & Corrected Ratio &Corrected & Incorrected & Corrected Ratio \\ \hline \hline
2                         & 105          & 66             & 1.59            & 132          & 51             & \textbf{2.58}            & 119          & 48             & \underline{2.47}            \\
3                         & 49           & 32             & 1.53            & 56           & 23             & \underline{2.43}            & 64           & 16             & \textbf{4.00}            \\
4                         & 32           & 21              & 1.52            & 31           & 13              & \underline{2.38}            & 42           & 16              & \textbf{2.62}            \\
5                         & 17           & 7              & 2.42            & 24           & 4              & \underline{6.00}            & 23           & 3              & \textbf{7.66}            \\
6                         & 12           & 10              & 1.20           & 19           & 8              & \underline{2.37}            & 20           & 7              & \textbf{2.85}            \\
7                         & 11           & 7              & \underline{1.57}            & 11           & 7              & \underline{1.57}            & 17           & 7              & \textbf{2.42}            \\
8                         & 7            & 3              & \underline{2.33}            & 6            & 5              & 1.20            & 3            & 1              & \textbf{3.00}          \\
9                         & 9            & 2              & \underline{4.50}            & 8            & 2              & 4.00            & 10            & 2               & \textbf{5.00}            \\
$\geq10$                       & 37           & 21             & 1.76            & 33           & 15              & \textbf{2.20}            & 44           & 21              & \underline{2.09}            \\ \hline
\multicolumn{1}{l}{total} & 279          & 169             & 1.65            & 320          & 128             & \underline{2.50}            & 342          & 121             & \textbf{2.82}            \\ \hline
\end{tabular}
The correction rate is calculated by $Cor.=\frac{\# \ of \ corrected}{\# \ of \ incorrected}$, where $\# \ of \ corrected$ and $\# \ of \ incorrected$ respectively represent the number of the model corrected CIM-WN wrong disambiguations and the number of the model incorrectly altered CIM-WN right disambiguations.
\end{threeparttable}
    \label{t6}
\end{table*}

\subsection{Ablation Analysis}

To answer \textbf{RQ 2} and further explore the primary contribution of Q-VWSD, we conducted ablation experiments. We first analyzed the disambiguation performance of models for target words with the number of semantic $s=0$, $s=1$, and $s>1$, as shown in Fig. \ref{f6}. When using CADG+WN, for words where $s=0,1$, CIM and Q-VWSD show similar performance because interference items are absent, causing Q-VWSD to degrade to CIM. However, for words where $s>1$, Q-VWSD surpasses CIM. CIM performs only marginally better than CLIP when using large language models, whereas Q-VWSD leverages large language models to achieve its best performance. Particularly for words with $s>1$, large language models significantly enhance the performance of Q-VWSD$_{QI}$. 

Table \ref{t5} provides a more detailed display of the accuracy of Q-VWSD in cases of polysemy, with the 2,000 samples being randomly sampled from polysemous words. It shows the right and wrong disambiguation results for the three models CIM-CADG+WN, Q-VWSD${_{QI}}$-CADG+WN, and Q-VWSD${_{QI}}$-LLM under different semantic numbers. Q-VWSD significantly outperforms CIM in most cases, and it can further leverage large language models to achieve the best results. This indicates that, when faced with polysemy, classical methods are unable to utilize the gloss information of polysemous words effectively, and may instead treat this information as noise, leading to incorrect judgments. In contrast, quantum methods can leverage the gloss information of polysemous words to produce correct results.

\begin{figure}[t]
    \centering
    \includegraphics[width=0.5\linewidth]{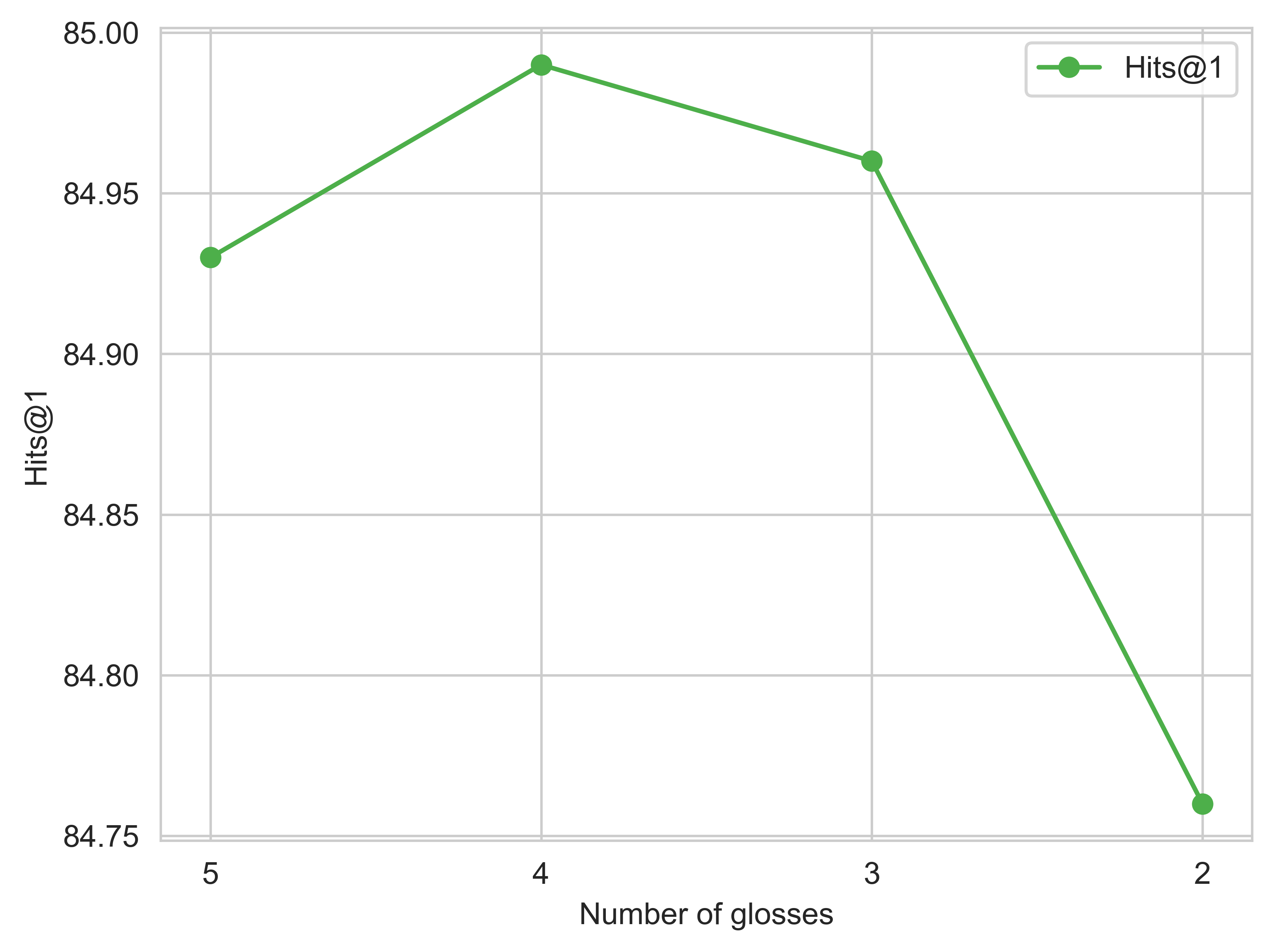}
    \caption{The performance of Q-VWSD with the number of glosses generated by LLMs ranging from 2 to 5.}
    \label{f7}
\end{figure}

To answer \textbf{RQ 3}, a deeper analysis of Q-VWSD$_{QC}$ is required, but it is challenging. Fortunately, its equivalent, Q-VWSD$_{QI}$, explicitly models the interference terms generated during the inference process. This allows us to perform more flexible ablation analysis. Therefore, in the following analysis, Q-VWSD we discuss refers to Q-VWSD$_{QI}$ based on the base version of the pre-trained model. 

We first analyzed the impact of phase on the model within the interference terms, as shown in Table \ref{t4}. Setting $cos(\theta)=1$ in Eq. (\ref{e14}) corresponds to considering only real values, while allowing $cos(\theta)\in[-1, -1]$ introduces complex values. The results indicate that Q-VWSD performs slightly better when considering complex values. We did not observe significant performance improvements, possibly because we assigned cosine similarity values to phases without effectively introducing complex values. Additionally, the table shows the performance of Q-VWSD with glosses provided by GPT-3.5 and GPT-4.0. The results indicate that performance improves with GPT-4.0, which is consistent with expectations. Lastly, we explored the impact of using large language models to generate different numbers of glosses for Q-VWSD, as depicted in Fig. \ref{f7}. The results indicate that the model performs best using up to four glosses, while more glosses decrease performance. This may suggest that most English words have four meanings.

\subsection{Correction Analysis}

\begin{figure}
    \centering
    \includegraphics[width=0.5\linewidth]{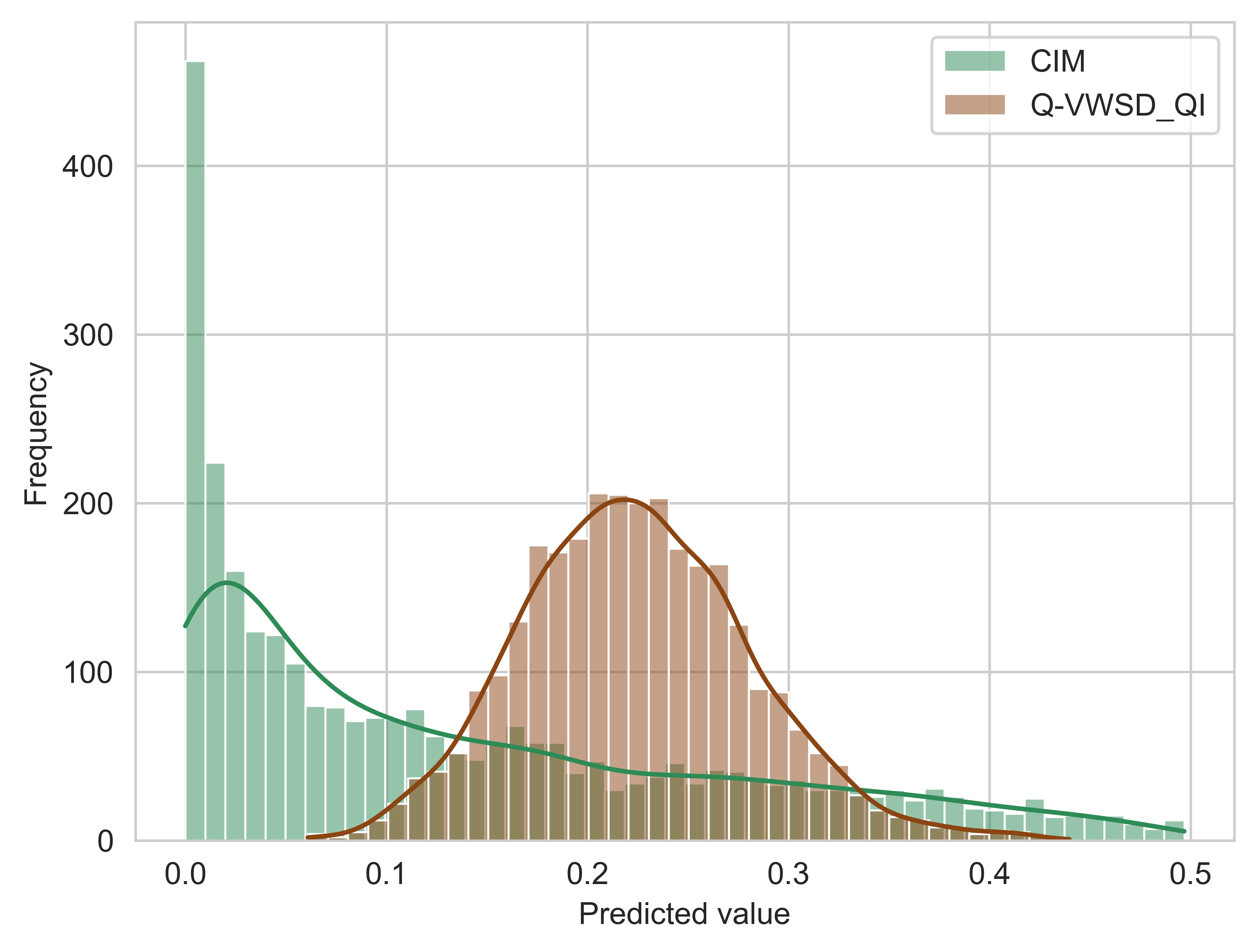}
    \caption{In the samples where CIM made incorrect disambiguations, the output value by both CIM and Q-VWSD.}
    \label{f8}
\end{figure}

In this subsection, to answer \textbf{RQ 4}, we further analyze the correction ability of Q-VWSD compared to CIM, as shown in Fig. \ref{f8}. The output values of the incorrectly disambiguated samples in CIM are primarily concentrated around 0, resembling a long-tail distribution. In contrast, Q-VWSD adjusts these samples to around 0.2, resembling a normal distribution. Therefore, classical probability seems to make ``arbitrary" decisions, whereas quantum probability makes more ``cautious" choices, leading to better performance.

In Table \ref{t6}, we present a more detailed analysis of the correction ability of Q-VWSD. It shows the results of samples that were error disambiguation in CIM-WN and subsequently corrected by models such as CIM-CADG+WN, Q-VWSD$_{QI}$-CADG+WN, and Q-VWSD$_{QI}$-LLM under multi-semantic conditions.
Obviously, Q-VWSD is superior to CIM, especially when combined with large language models. This indicates that Q-VWSD can correct samples where CIM made incorrect disambiguations while ensuring that correctly disambiguation are not changed to incorrect ones. We also note that for semantic $s=7,8$, Q-VWSD does not consistently outperform CIM, likely due to the small number of such samples causing fluctuations. Overall, Q-VWSD still performs better. We believe this is primarily because the interference term models the semantic interactions between glosses, allowing it to correct the posterior probabilities, thereby mitigating semantic bias and leading to more appropriate decisions.

\section{Conclusions}

This paper proposes a quantum inference model for unsupervised VWSD. It mitigates semantic bias by using the superposition state representation of glosses. Based on quantum probability, we reveal that Q-VWSD can generate additional interference terms compared to classical probability methods. This guides the semantic interaction between glosses and is the main reason Q-VWSD can mitigate semantic bias. Experimental results show that Q-VWSD surpasses the latest classical probability methods and effectively utilizes LLM glosses. In contrast, classical probability methods cannot effectively utilize the glosses generated by the LLM, highlighting the importance of addressing semantic bias and the advantages of Q-VWSD.

Quantum machine learning is highly anticipated, but scenarios demonstrating its quantum advantages have yet to materialize. Our work demonstrates the potential of QML in practical scenarios and explores alternative models that can run on classical computing through equivalent heuristic approaches. This alleviates the community's concerns about the practicality of QML in an era where quantum hardware remains immature. At the same time, it also addresses potential issues of lack of rigor that may arise from relying solely on quantum-inspired methods. Our approach is an exploration of the advantages of quantum modeling, demonstrating that quantum machine learning can also inspire us to achieve better artificial intelligence.

\bibliographystyle{unsrt}  
\bibliography{references}  
\appendix

\end{document}